\definecolor{crimsonglory}{rgb}{0,0,0} 
\theoremstyle{plain}
\newtheorem{theorem}{Theorem}[section]
\DeclarePairedDelimiter\floor{\lfloor}{\floor}
\crefname{LP}{LP}{LPs}
\newcommand{\ignore}[1]{}
\newcounter{proccnt}
\def\GrabProofArgument[#1]{ #1: \egroup\ignorespaces}
\def\proof{\noindent\textbf\bgroup Proof%
	\@ifnextchar[{\GrabProofArgument}{. \egroup\ignorespaces}}
\title{Lower Bound for Online \MMS\ Assignment of Indivisible Chores}
\author{Masoud Seddighin \and Saeed Seddighin}
\date{}
\newcommand{\MMS}{\textsf{MMS}}
\newcommand{\sce}{\textsf{sc}}
\newcommand{\cost}{V}
\newcommand{\valu}{V}
\newcommand{\chores}{M}
\newcommand{\agents}{N}
\newcommand{\agent}{a}
\newcommand{\chore}{b}
\begin{document}
	\maketitle
	\thispagestyle{empty}
	
	\begin{abstract}
		We consider the problem of online assignment of indivisible chores under \MMS\ criteria. The previous work proves that any deterministic online algorithm for chore division has a competitive ratio of at least 2. In this work, we improve this bound by showing that no deterministic online algorithm can obtain a competitive ratio better than $n$ for $n$ agents.
	\end{abstract}
	\section{Introduction}\label{introduction}

Fair division is a well-established problem in mathematics and economics, with broad relevance to disciplines such as political science, social science, and computer science \cite{Dubins:first,Steinhaus:first,brams1996fair}. The concept was first formally introduced by Hugo Steinhaus under the framework of "cake cutting," which focuses on distributing a divisible resource—the "cake"—among multiple agents with varying preferences. For over eight decades, this problem has attracted significant scholarly interest \cite{even1984note,stromquist1980cut,brams1996fair}. In more recent years, attention has shifted toward a discrete variant of the problem involving the allocation of indivisible tasks or goods, especially within computer science \cite{Procaccia:first,ghodsi2018fair,caragiannis2016unreasonable,amanatidis2015approximation,Budish:first,Saberi:first,robertson1998cake}. Rather than dividing a divisible resource such as a cake, the goal here is to allocate a set of indivisible items or tasks among 
$n$ agents. In such cases, direct proportional division is often impossible—for example, when only a single item must be assigned.

This work focuses on the dual problem: the fair assignment of chores. Unlike goods, which are desirable, chores represent undesirable tasks that impose costs on the agents. The objective is to distribute these chores in a way that no agent bears more than their fair share of the burden. Our focus lies specifically on the indivisible setting, using the maximin (\MMS) fairness criterion introduced by Budish~\cite{Budish:first} to guide the allocation which has been used in many works~\cite{farhadi2017fair,maseed123,akrami2024randomized,our-ec-paper,akrami2023simplification,Procaccia:first,Procaccia:second,kurokawa2015can}.

Formally, let $\agents$ be a set of $n$ agents, and $\chores$ be a set of $m$ chores. We denote the agents by $\agents=\{\agent_1, \agent_2, \ldots, \agent_n\}$ and chores by $\chores = \{\chore_1, \chore_2,\ldots,\chore_m\}$. Each agent $\agent_i$ has an \textit{additive} function $\cost_i$ over the chores which denotes her cost for each subset of chores. Let $\Pi(\chores)$ denote the set of all $n$-partitionings of the chores. The maximin  value of agent $\agent_i$ (denoted by $\MMS_i$) is defined as

\[
\MMS_i = \max_{\langle \pi_1, \pi_2, \ldots, \pi_n \rangle \in \Pi} \min_{1 \leq j \leq n} \valu_i(\pi_j),
\]
where $\Pi$ represents the set of all possible partitionings of $\chores$ into $n$ bundles, and $\valu_i(\pi_j)$ is the cost that agent $\agent_i$ assigns to bundle $\pi_j$. An allocation is called \textsf{MMS} (or $\beta$-\textsf{MMS}) if every agent $\agent_i$ receives a bundle which costs at most $\MMS_i$ (or $\beta \cdot \MMS_i$) to her.

In this setting, the goal is to assign chores to agents such that the cost each agent incurs from their assigned bundle is within a constant multiplicative factor of their $\MMS$ value. The $\MMS$ value acts as a natural lower bound, since even in the case of identical costs across agents, at least one agent must receive a bundle with cost no less than their $\MMS$. This leads to the central question: \textit{Can we find chore assignments where each agent's total cost is bounded by a constant multiple of their $\MMS$ value?} The current best known upper bound for this factor is $\frac{13}{11}$~\cite{huang2023reduction}.

A natural and practical extension of the fair chore division problem is its online variant, where chores arrive sequentially over time rather than being known in advance. In this setting, we assume that the number of agents is known ahead of time, but the chores—and the agents’ costs for them—are revealed one by one as they arrive. Upon the arrival of each chore, we learn how costly it is for each agent and must assign it immediately and irrevocably, without knowledge of future chores. This online formulation captures real-world scenarios such as task assignment in dynamic work environments, where decisions must be made in real time. The central challenge is to design an online algorithm that ensures fairness—measured, for instance, by the maximin share (\MMS) benchmark—despite the uncertainty and lack of full information.

Previous work~\cite{zhou2023multi} shows that for online allocation of indivisible goods, no fraction of the \MMS\ benchmark can be guaranteed to all agents via a deterministic algorithm. For the online assignment of indivisible chores, the best known lower bound for deterministic algorithms is a lower bound of $2$~\cite{zhou2023multi}. In other words, it is proven that for any deterministic online algorithm for assignment of indivisible chores there is one scenario in which one agent receives a bundle of chores that is twice as costly as her \MMS\ value. In this work, we improve this bound to a lower bound of $n$ for $n$ agents.

	\section{Preliminaries}
Let $\agents$ be a set of $n$ agents, and $\chores$ be a set of $m$ chores. We denote the agents by $\agents=\{\agent_1, \agent_2, \ldots, \agent_n\}$ and chores by $\chores = \{\chore_1, \chore_2,\ldots,\chore_m\}$. Each agent $\agent_i$ has an \textit{additive} function $\cost_i$ over the chores which denotes her cost for each subset of chores.  As proposed by Budish~\cite{Budish:first}, the maximin value of agent $\agent_i$ (denoted by $\MMS_i$) is defined as
\begin{equation}\label{eq:1}
	\MMS_i = \min_{\langle A_1, A_2, \ldots, A_n\rangle \in \Pi(\chores)} \max_{\agent_j \in \agents} \cost_i(A_j).
\end{equation}
We define an assignment as being \emph{$\alpha$-\MMS} if all chores are assigned to agents and each agent $\agent_i$ receives a bundle $S_i$ of chores such that $\cost_i(S_i) \leq \alpha \cdot \MMS_i$.

In our online setting, initially, our algorithm is aware of the value of $n$. It is important to note that our algorithm is not even aware of the value of $m$ in the beginning. The chores then arrive one by one, and each time a chore arrives our algorithm learns the cost of each chore for each agent. After that, it has to make an irrevocable decision as to which agent the chore is assigned to. The algorithm terminates after all chores arrive. We call an algorithm $\alpha$-competitive, if it assigns the chores to the agents in a way that after the termination of the algorithm, the assignment is $\alpha$-\MMS.

	\section{Lower Bound of $n$ for Deterministic Algorithms}
It has been proven in previous work~\cite{zhou2023multi} that any deterministic online algorithm for chore division has a competitive ratio of at least $2$. Indeed, the trivial algorithm that assigns all chores to an agent has a competitive ratio of $n$. We improve the lower bound of $2$ to a lower bound of $n$ and close this gap by showing that no deterministic algorithm for online chore division has a competitive ratio better than $n$.

\begin{theorem}
	No deterministic algorithm for online chore division has competitive ratio better than $n$ for $n$ agents.
\end{theorem}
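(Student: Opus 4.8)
The plan is to argue by contradiction: suppose some deterministic online algorithm $\textsf{ALG}$ is $\alpha$-competitive for an $\alpha<n$, and build an adaptive adversarial instance on which $\textsf{ALG}$ is forced to hand some agent $\agent_i$ a bundle costing strictly more than $\alpha\cdot\MMS_i$. Throughout I would keep every chore cost in $\{0,1\}$, because then each $\MMS_i$ is controlled by a single integer: if $\agent_i$ assigns cost $1$ to exactly $c_i$ of the chores and $0$ to the rest, spreading those $c_i$ unit chores as evenly as possible over the $n$ bundles gives $\MMS_i=\lceil c_i/n\rceil$. This turns the problem into a purely combinatorial one — drive some agent to hold nearly all of the chores she values while keeping $c_i$ small — and it makes her competitive ratio equal to (number of her valued chores she is assigned)$/\lceil c_i/n\rceil$, which is at most $n$, matching the trivial upper bound and telling us exactly what extremal configuration to aim for.

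The central obstacle, which dictates the whole design, is that the adversary can never force a particular chore onto a particular agent. If a chore costs $0$ to some agent, $\textsf{ALG}$ can dump it there for free; and making a chore expensive for the unwanted recipients backfires, since an expensive chore inflates precisely those recipients' $\MMS$ and so costs them nothing in competitive ratio. Hence assignments cannot be "forbidden" directly, and all pressure on $\textsf{ALG}$ must be aggregate, exploiting that $\textsf{ALG}$ commits irrevocably while each $\MMS_i$ is a global quantity fixed only at the end.

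Given this, I would run an adaptive, block-structured adversary performing a binary-search-style elimination over the agents. Maintain a shrinking candidate set $\subsett\subseteq\agents$, initially all of $\agents$; in round $t$ release a block $\block_t$ of unit-cost chores valued by all current survivors, and inspect how $\textsf{ALG}$ splits $\block_t$ among them. Partition the survivors into the light agents $\easily$ that $\textsf{ALG}$ kept well below their current share and the heavy agents $\noteasily$ carrying a disproportionate load; an averaging/pigeonhole argument over the $|\block_t|$ chores and the survivors guarantees $\noteasily$ is a constant fraction of the survivors, so we retain $\noteasily$ and roughly halve $\subsett$. After about $\log n$ rounds a single victim remains who has accumulated load in every block, while the count of chores actually charged to her is kept near the threshold $\subsetsize$; sizing each block and each surviving set against this threshold is what lets the final load exceed $\alpha$ times her $\MMS$, and tuning the construction so she holds essentially all of her valued chores pushes the forced ratio up to $n$.

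The step I expect to be the crux is maintaining the invariant across the $\log n$ rounds simultaneously for load and for $\MMS$: I must keep the surviving victim's load growing while stopping her pool $c_i$ of valued chores from growing so fast that $\lceil c_i/n\rceil$ outpaces the load and dilutes the ratio. Because $\MMS$ is monotone in the chore set and can never be lowered after the fact, the valuations in every block — which survivors value which chores, and how the final cleanup chores are routed among the eliminated agents $\easily$ to keep their ratios legitimately small — must be planned so that no agent's $\MMS$ is accidentally inflated. Making this bookkeeping globally consistent, rather than any single inequality, is the real content of the argument.
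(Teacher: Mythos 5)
There is a genuine gap, and it is fatal: your restriction to $\{0,1\}$ costs makes the bound you are after impossible, because against binary costs there is a trivially $1$-competitive deterministic online algorithm. Any chore valued $0$ by at least one agent can be assigned to such an agent at zero cost to everybody, so the only chores the algorithm cannot dispose of for free are those valued $1$ by \emph{all} agents; assigning these $m_1$ chores round-robin gives every agent at most $\lceil m_1/n\rceil$ of them, and since $m_1 \le c_i$ this is at most $\lceil c_i/n\rceil = \MMS_i$. Hence no adversary confined to binary costs can force any ratio above $1$ --- not $n$, not even the previously known bound of $2$. Your second paragraph correctly identifies the free-dumping obstacle but does not carry it to its conclusion: in your block construction, either the eliminated agents value a block's chores at $0$, in which case the algorithm dumps the entire block on them for free and no survivor ever accumulates load, or they value them at $1$, in which case these are all-ones chores and round-robin keeps every agent at ratio $1$. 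No bookkeeping over the $\log n$ elimination rounds can escape this dichotomy, so the light/heavy invariant you call the crux can never be established in the first place.

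The paper's proof shows what is actually needed, and both ingredients are unavailable with two cost levels. First, an unbounded, geometrically growing cost scale: within each pattern the victim's costs grow by factors of $(1+1/\epsilon')$, so the single chore the algorithm is eventually forced to give her costs at least $1/\epsilon'$ times the sum of all chores she avoided --- her load becomes concentrated in a few dominating chores while her total valuation, and hence her $\MMS$, stays near $w$. Second, a recursive threat structure: in $\sce(i+1,\epsilon)$ the costs for the other $i$ agents follow the recursively constructed scenario $\sce(i,\epsilon)$, whose guarantee means that if the algorithm withholds chores from the victim for $\ell_i$ consecutive steps, the remaining agents are already trapped; this is what \emph{forces} an assignment to the victim, the very thing you observe an adversary cannot do directly. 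Repeating the pattern piles a load of roughly $(1-\epsilon')nw$ on the victim while $\MMS_1 \le (1+3\epsilon')w$, giving the ratio $n(1-\epsilon)$. Your elimination scheme has no analogue of either mechanism: keeping the ``heavy'' half of the survivors does not force anyone to get heavier, and with unit costs no single chore can ever dominate the chores an agent previously escaped.
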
 
\begin{proof}
For any $\epsilon > 0$, we construct an online scenario, namely $\sce(n,\epsilon)$, for which any deterministic algorithm fails to be $(n-n\epsilon)$-competitive. We construct our scenario in a recursive manner. We parametrize our scenarios by two quantities. More precisely, we represent a scenario by $\sce(i,\epsilon)$ where the algorithm is guaranteed to only assign chores to agents $\{\agent_1, \agent_2, \ldots, \agent_i\}$ and that at the end of the scenario we can be sure that one of the agents has received a bundle of chores whose cost for her is more than $(n-n\epsilon)$ times her $\MMS$\ value. Each scenario assumes that an arbitrary sequence of chores have arrived so far and the algorithm has assigned them in an arbitrary manner and from then on the scenario decides how the chores are created. Scenario $\sce(i,\epsilon)$ adaptively constructs the next chores and guarantees that if the algorithm is limited to assign chores to only $i$ agents, then the assignment fails to be $(n-n\epsilon)$-$\MMS$ after the scenario ends.

 We begin by considering a scenario wherein we are guaranteed that the algorithm will only assign chores to agent $\agent_1$. We use that scenario to construct a scenario wherein the algorithm is guaranteed to only assign the chores to agents $\agent_1$ or $\agent_2$. More generally, each time we use the scenario that traps the algorithm under the assumption that the algorithm has to assign the chores to only $i$ agents, and construct a similar scenario under the assumption that the algorithm has to assign chores to only $i+1$ agents. The final scenario we construct assumes that the algorithm has to assign each of the chores to one of the $n$ agents which aligns with our problem setting. 
 
Let us start with the simplest case of $\sce(1,\epsilon)$ where the algorithm has to only assign chores to agent $\agent_1$. In this case, let $x$ be the sum of the costs of all the previous chores that have arrived for agent $\agent_1$. From here on, we define the next $n$ chores to have a cost of $w=x/\epsilon$ for agent $\agent_1$. Notice that since the algorithm is guaranteed to assign all the coming chores to agent $\agent_1$, the cost of the chores for the rest of the agents do not matter in our analysis. It follows that after assigning these $n$ chores to agent $\agent_1$, the total cost of agent $\agent_1$ for the chores assigned to her is at least $nw$ and her $\MMS$\ value is at most $w+x \leq w(1+\epsilon)$. Thus, if we terminate the algorithm after those $n$ chores, the competitive ratio of the algorithm would be at least 
\begin{align*}
\frac{nw}{w(1+\epsilon)} &= \frac{n}{1+\epsilon} \\
&> (1-\epsilon)n.
\end{align*}

Now, we use this to construct $\sce(2,\epsilon)$. Again, we assume that some chores have arrived so far and the algorithm has assigned them to the agents in an arbitrary manner. We define $\epsilon'=\epsilon/4$ and $x$ to be the sum of the costs of the chores that have arrived so far for agent $\agent_1$. Also, we define $\ell_1 = n$ to be the maximum number of chores that arrive in scenario $\sce(1,\epsilon)$ before the algorithm terminates. We define $w = x/\epsilon'$ and keep introducing chores by repeating the following pattern consecutively. In this pattern, we keep adding chores whose costs for agent $\agent_1$ increase exponentially until one of those chores is assigned to agent $\agent_1$. At that point, the pattern ends and we start over by repeating the pattern from the beginning. In our pattern we introduce the chores in the following way:
\begin{itemize}
	\item The cost of the first chore for agent $\agent_1$ is $w/(1+1/\epsilon')^{\ell_1}$. The cost of the second chore for agent $\agent_1$ is $w/(1+1/\epsilon')^{\ell_1-1}$ and so on. 
	\item The costs of the next $\ell_1$ chores for agent $\agent_2$ is determined based on $\sce(1,\epsilon)$ until this pattern ends.
\end{itemize}
It follows that if $\ell_1$ consecutive chores are given to agent $\agent_2$ then the algorithm would be not be $(n-n\epsilon)$-competitive by the guarantees of $\sce(1,\epsilon)$. Thus, at some point one of the chores should be given to agent $\agent_1$. Let this be the $i$'th chore in this sequence. Since the chores have increasing costs for agent $\agent_1$ in an exponential way we have:
$$w/(1+1/\epsilon')^{\ell_1-i+1} \geq (1/\epsilon') \sum_{1 \leq j < i} w/(1+1/\epsilon')^{\ell_1-j+1}.$$
In other words, the cost of the chore given to agent $\agent_1$ for her is at least $1/\epsilon'$ times the cost of the chores that are not given to agent $\agent_1$ for her. Once this chore is given to agent $\agent_1$, we start over the next pattern. This repetition stops when after the end of a pattern, the total cost of the chores for agent $\agent_1$ generated in this process is at least $nw$. It follows that the cost of the chores given to agent $\agent_1$ in this process is at least 
\begin{align*}
\frac{1/\epsilon'}{1+1/\epsilon'} nw \geq (1-\epsilon') nw.
\end{align*}
By the way we construct the chores, the total cost of all chores for agent $\agent_1$ would be bounded by 
\begin{align*}
x + nw + \sum_{j = 1}^{\ell_1-1} w/(1+1/\epsilon')^{\ell_1-j+1} &= \epsilon'w + nw  + \sum_{j = 1}^{\ell_1-1} w/(1+1/\epsilon')^{\ell_1-j+1}\\
& \leq \epsilon'w + nw + \epsilon' w\\
&  = (n+2\epsilon') w
\end{align*}
Moreover, the largest cost of a chore for agent $\agent_1$ is bounded by $w/(1+1/\epsilon') \leq \epsilon'w$ 
 and thus $$\MMS_1 \leq nw/n + 3\epsilon' w = (1+3\epsilon')w.$$ This yields a competitive ratio at least 
\begin{align*}
	\frac{(1-\epsilon')nw}{(1+3\epsilon')w}  &= n\frac{(1-\epsilon')}{(1+3\epsilon')}  \\
	&\geq n(1-4\epsilon') \\
	& \geq n(1-\epsilon).
\end{align*}

The transition from each $\sce(i,\epsilon)$ to $\sce(i+1,\epsilon)$ follows the exact same blueprint except that the values of $\ell_i$'s increase exponentially. For example $\ell_1 = n$ is polynomial in terms of $n$ but the value of $\ell_2$ is exponentially large.  For example, the cost of the smallest chore that we generate for $\sce(2,\epsilon)$ would be $w/(1+1/\epsilon')^{\ell_1}$ and thus it takes at most
\begin{align*}
\ell_2 \leq \ell_1 + \frac{nw}{w/(1+1/\epsilon')^{\ell_1}} &= \ell_1 + n(1+1/\epsilon')^{\ell_1} 
\end{align*}
steps to terminate $\sce(2,\epsilon)$.
\end{proof}

	\bibliographystyle{plain}
	\bibliography{draft}

\begin{thebibliography}{10}

\bibitem{akrami2023simplification}
Hannaneh Akrami, Jugal Garg, Eklavya Sharma, and Setareh Taki.
\newblock Simplification and improvement of mms approximation.
\newblock In {\em Proceedings of the Thirty-Second International Joint
  Conference on Artificial Intelligence}, pages 2485--2493, 2023.

\bibitem{akrami2024randomized}
Hannaneh Akrami, Kurt Mehlhorn, Masoud Seddighin, and Golnoosh Shahkarami.
\newblock Randomized and deterministic maximin-share approximations for
  fractionally subadditive valuations.
\newblock {\em Advances in Neural Information Processing Systems}, 36, 2024.

\bibitem{amanatidis2015approximation}
Georgios Amanatidis, Evangelos Markakis, Afshin Nikzad, and Amin Saberi.
\newblock Approximation algorithms for computing maximin share allocations.
\newblock {\em ACM Transactions on Algorithms (TALG)}, 13(4):52, 2017.

\bibitem{brams1996fair}
Steven~J Brams and Alan~D Taylor.
\newblock {\em Fair Division: From cake-cutting to dispute resolution}.
\newblock Cambridge University Press, 1996.

\bibitem{Budish:first}
Eric Budish.
\newblock The combinatorial assignment problem: Approximate competitive
  equilibrium from equal incomes.
\newblock {\em Journal of Political Economy}, 119(6):1061--1103, 2011.

\bibitem{caragiannis2016unreasonable}
Ioannis Caragiannis, David Kurokawa, Herv{\'e} Moulin, Ariel~D Procaccia,
  Nisarg Shah, and Junxing Wang.
\newblock The unreasonable fairness of maximum nash welfare.
\newblock In {\em Proceedings of the 2016 ACM Conference on Economics and
  Computation}, pages 305--322. ACM, 2016.

\bibitem{Dubins:first}
Lester~E Dubins and Edwin~H Spanier.
\newblock How to cut a cake fairly.
\newblock {\em American mathematical monthly}, pages 1--17, 1961.

\bibitem{even1984note}
Shimon Even and Azaria Paz.
\newblock A note on cake cutting.
\newblock {\em Discrete Applied Mathematics}, 7(3):285--296, 1984.

\bibitem{farhadi2017fair}
Alireza Farhadi, MohammadTaghi Hajiaghayi, Mohammad Ghodsi, Sebastien Lahaie,
  David Pennock, Masoud Seddighin, Saeed Seddighin, and Hadi Yami.
\newblock Fair allocation of indivisible goods to asymmetric agents.
\newblock In {\em Proceedings of the 16th Conference on Autonomous Agents and
  MultiAgent Systems}, pages 1535--1537. International Foundation for
  Autonomous Agents and Multiagent Systems, 2017.

\bibitem{ghodsi2018fair}
Mohammad Ghodsi, MohammadTaghi HajiAghayi, Masoud Seddighin, Saeed Seddighin,
  and Hadi Yami.
\newblock Fair allocation of indivisible goods: Improvements and
  generalizations.
\newblock In {\em Proceedings of the 2018 ACM Conference on Economics and
  Computation}, pages 539--556. ACM, 2018.

\bibitem{huang2023reduction}
Xin Huang and Erel Segal-Halevi.
\newblock A reduction from chores allocation to job scheduling.
\newblock {\em arXiv preprint arXiv:2302.04581}, 2023.

\bibitem{kurokawa2015can}
David Kurokawa, Ariel~D Procaccia, and Junxing Wang.
\newblock When can the maximin share guarantee be guaranteed?
\newblock In {\em AAAI}, volume~16, pages 523--529, 2016.

\bibitem{Procaccia:first}
David Kurokawa, Ariel~D Procaccia, and Junxing Wang.
\newblock Fair enough: Guaranteeing approximate maximin shares.
\newblock {\em Journal of the ACM (JACM)}, 65(2):8, 2018.

\bibitem{Procaccia:second}
David Kurokawa, Ariel~D Procaccia, and Junxing Wang.
\newblock Fair enough: Guaranteeing approximate maximin shares.
\newblock {\em Journal of the ACM (JACM)}, 65(2):8, 2018.

\bibitem{Saberi:first}
Richard~J Lipton, Evangelos Markakis, Elchanan Mossel, and Amin Saberi.
\newblock On approximately fair allocations of indivisible goods.
\newblock In {\em Proceedings of the 5th ACM conference on Electronic
  commerce}, pages 125--131. ACM, 2004.

\bibitem{robertson1998cake}
Jack Robertson and William Webb.
\newblock {\em Cake-cutting algorithms: Be fair if you can}.
\newblock AK Peters/CRC Press, 1998.

\bibitem{maseed123}
Masoud Seddighin and Saeed Seddighin.
\newblock Improved maximin guarantees for subadditive and fractionally
  subadditive fair allocation problem.
\newblock {\em Artificial Intelligence}, 327:104049, 2024.

\bibitem{our-ec-paper}
Masoud Seddighin and Saeed Seddighin.
\newblock Beating the logarithmic barrier for the subadditive maximin share
  problem.
\newblock In {\em EC}, 2025.

\bibitem{Steinhaus:first}
Hugo Steinhaus.
\newblock The problem of fair division.
\newblock {\em Econometrica}, 16(1), 1948.

\bibitem{stromquist1980cut}
Walter Stromquist.
\newblock How to cut a cake fairly.
\newblock {\em The American Mathematical Monthly}, 87(8):640--644, 1980.

\bibitem{zhou2023multi}
Shengwei Zhou, Rufan Bai, and Xiaowei Wu.
\newblock Multi-agent online scheduling: Mms allocations for indivisible items.
\newblock In {\em International Conference on Machine Learning}, pages
  42506--42516. PMLR, 2023.

\end{thebibliography}
	\appendix
	
\end{document}